\newtheorem{theorem}{Theorem}
\crefname{figure}{Figure}{Figures}
\Crefname{figure}{Figure}{Figures}
\crefname{section}{Section}{Sections}
\Crefname{section}{Section}{Sections}
\crefname{table}{Table}{Table}
\Crefname{table}{Table}{Table}
\newcommand{\simparam}[1]{\texttt{#1}}
\newcommand{\systemname}{\textsc{Credulix}}
\begin{document}
\title{Limiting the Spread of Fake News on Social Media Platforms by Evaluating Users' Trustworthiness}

\author{Oana Balmau}
\affiliation{%
  \institution{EPFL}
}
\email{oana.balmau@epfl.ch}

\author{Rachid Guerraoui}
\affiliation{%
  \institution{EPFL}
}
\email{rachid.guerraoui@epfl.ch}

\author{Anne-Marie Kermarrec}
\affiliation{%
  \institution{Mediego, EPFL}
}
\email{amk@mediego.com}

\author{Alexandre Maurer}
\affiliation{%
  \institution{EPFL}
}
\email{alexandre.maurer@epfl.ch}

\author{Matej Pavlovic}
\affiliation{%
  \institution{EPFL}
}
\email{matej.pavlovic@epfl.ch}

\author{Willy Zwaenepoel}
\affiliation{%
  \institution{EPFL}
}
\email{willy.zwaenepoel@epfl.ch}

\begin{abstract}

Today's social media platforms enable to spread both authentic and fake news very quickly.
Some approaches have been proposed to automatically detect such ``fake'' news based on their content,
but it is difficult to agree on universal criteria of authenticity (which can be bypassed by adversaries once known).
Besides, it is obviously impossible to have each news item checked by a human.

In this paper, we a mechanism to limit the spread of fake news which is not based on content.
It can be implemented as a plugin on a social media platform.
The principle is as follows: a team of fact-checkers reviews a small number of news items (the most popular ones),
which enables to have an estimation of each user's inclination to share fake news items.
Then, using a Bayesian approach, we estimate the trustworthiness of future news items,
and treat accordingly those of them that pass a certain ``untrustworthiness'' threshold.

We then evaluate the effectiveness and overhead of this technique on a large Twitter graph.
We show that having a few thousands users exposed to one given news item enables to reach a very precise estimation of its reliability.
We thus identify more than 99\% of fake news items with no false positives.
The performance impact is very small: the induced overhead on the 90th percentile latency is less than 3\%, and less than 8\% on the throughput of user operations.

\end{abstract}


%
%



\maketitle

\section{Introduction}

The expression ``fake news'' has become very popular after the 2016 presidential election in the United States. Both political sides accused each other of spreading false information on social media, in order to influence public opinion.   
Fake news have also been involved in Brexit and seem to have played a crucial role in the French election. The phenomenon is considered by many as a threat to democracy, since the proportion of people getting their news from social media is significantly increasing \cite{fakedemo}. 


Facebook and Google took a first concrete measure by removing advertising money from websites sharing a significant number of fake news \cite{fakead}. This, however, does not apply to websites that do not rely on such money: popular blogs, non-professional streaming channels, or media relying on donations, to name a few.  
Facebook also considered labeling some news items as ``disputed'' when independent human fact-checkers contest their reliability \cite{fakecheck}. However, there cannot be enough certified human fact-checkers for a worldwide social network. While it is very easy to share fake news, it may take very long to check them, clearly too long to prevent them from getting viral. 

We present \systemname, the first content-agnostic system to prevent fake news from getting \emph{viral}. 
From a software perspective, \systemname\ is a plugin to a social media platform.
From a more abstract perspective, it can also be viewed as a \emph{vaccine} for the social network. Assuming the system has been exposed to some (small) amount of fake news in the \emph{past}, \systemname\ enables it to prevent \emph{future} fake news from becoming viral.
It is important to note that our approach does not exclude other (e.g. content-based) approaches, but \emph{complements} them.

At the heart of our approach lies a simple but powerful Bayesian result we prove in this paper, estimating the credibility of news items based on which users shared them and how these users treated fake news in the past. News items considered fake with a sufficiently high probability can then be prevented from further dissemination, i.e., from becoming viral. 

Our Bayesian result is in the spirit of \emph{Condorcet's jury Theorem} \cite{austen1996information}, which states that a very high level of reliability can be achieved by a large number of weakly reliable individuals.
To determine the probability of falsehood of a news item X, we look at the behavior of users towards X. This particular behavior had a certain a priori probability to happen. We compute this probability, based on what we call 
\emph{user credulity records}: records of which fact-checked items users have seen and shared. Then, after determining the average fraction of fake news on the social network, we apply Laplace's Rule of Succession \cite{Zabell1989} and then Bayes' Theorem \cite{Koch1990} to obtain the desired probability. 


\systemname\ does retain the idea of using a team of certified human fact-checkers.
However, we acknowledge that they cannot review \emph{all} news items. Such an overwhelming task would possibly require even more fact-checkers than users. Here, the fact-checkers only check a \emph{few} \emph{viral} news items, i.e., ideally news items that have been shared and seen the most on the social network\footnote{Some news items are indeed seen by millions, and are easy to check a posteriori. For instance, according to CNN \cite{fakecnn}, the following fake news items were read by millions: \emph{``Thousands of fraudulent ballots for Clinton uncovered''}; \emph{``Elizabeth Warren endorsed Bernie Sanders''}; \emph{``The NBA cancels 2017 All-Star Game in North Carolina''}. According to  BuzzFeed \cite{fakebuzz}, the fake news item \emph{``Obama Signs Executive Order Banning The Pledge Of Allegiance In Schools Nationwide''}
got more than 2 millions engagements on Facebook, and
\emph{``Pope Francis Shocks World, Endorses Donald Trump for President, Releases Statement''}
almost 1 million.}.
Many such fact-checking initiatives already exist all around the world \cite{politifact,snopes,wpfactchecker,truthorfiction,fullfact}.

Such checks enable us to build \emph{user credulity records}.
Our Bayesian formula determines the probability that a given news item is fake using the records of users who viewed or shared it.
When this probability goes beyond a threshold (say 99.9999\%), the social network can react accordingly.
E.g., it may stop showing the news item in other users' news feeds.
It is important to note that our approach does not require any users to share a large amount of fake news.
It suffices that some users share more fake news than others.

Our approach is \emph{generic} in the sense that it does not depend on any specific criteria. Here, for instance, we look at what users \emph{share} to determine if a news item is \emph{fake}. However, the approach is independent of the precise meanings of ``share'' and ``fake'': they could respectively be replaced by (``like'' or ``report'') and  (``funny'', ``offensive'', ``politically liberal" or ``politically conservative'').

Turning the theory behind \systemname\ into a system deployable in practice is a non-trivial task.
In this paper we address these challenges as well.
In particular, we present a practical approach to computing news item credibility in a \emph{fast}, incremental manner.

We implement \systemname\ as a standalone Java plugin and connect it to Twissandra \cite{twissandra} (an open source Twitter clone),
which serves as a baseline system.
\systemname\ interferes very little with the critical path of users' operations and thus has a minimal impact on user request latency.
We evaluate \systemname\ in terms of its capacity to detect fake news as well as its performance overhead when applied to a real social network of over 41M users \cite{kwak2010twitter}.
After fact-checking the 1024 most popular news items (out of a total of over 35M items),
over 99\% of unchecked fake news items are correctly detected by \systemname.
We also show that \systemname\ does not incur significant overhead in terms of throughput and latency of user operations (sharing items and viewing the news feed): average latency increases by at most 5\%, while average throughput decreases by at most 8\%. 

The paper is organized as follows. \Cref{sec:theory} presents the theoretical principles behind \systemname. \Cref{sec:design} presents the design and implementation of \systemname. \Cref{sec:evaluation} reports on our evaluation results. \Cref{sec:discussion} discusses the limitations and tradeoffs posed by \systemname. \Cref{sec_rw} discusses related work and \cref{sec_conc} concludes.

\section{Theoretical Foundations}
\label{sec:theory}
In this section we give an intuition of the theoretical result underlying \systemname, followed by its formalization as a theorem.
We finally show how to restate the problem in a way that allows efficient, fast computation of news item credibility.

\begin{figure}
\begin{center}
\includegraphics[width=\columnwidth]{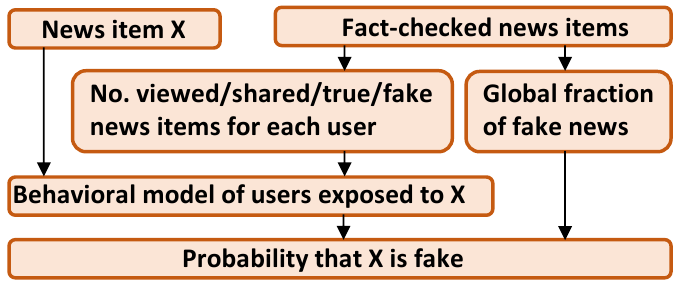}
\vspace{-.7cm}
\caption{News item falsehood probability computation.} 
\label{fig:schemebayes}
\end{center}
\vspace{-.5cm}
\end{figure}

\subsection{Intuition}
The context is a social network where users can post news items, such as links to newspapers or blog articles.
Users exposed to these news items can in turn share them with other users (Twitter followers, Facebook friends etc.). The social network has a fact-checking team whose role is to determine whether certain news items are fake (according to some definition of fake)\footnote{Our truth and falsehood criteria here are as good as the fact-checking team.}. The news items that the fact-checking team needs to check is very low compared to the total number of items in the social network.

\begin{figure}
\begin{center}
\includegraphics[width=\columnwidth]{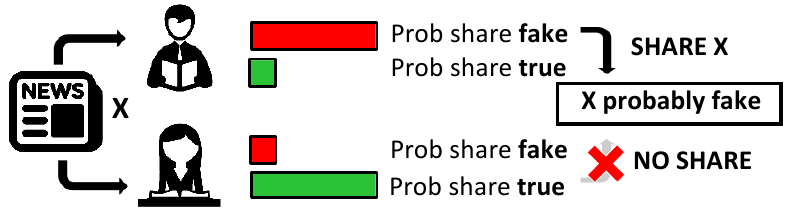}
\vspace{-.7cm}
\caption{Users reacting to new item $X$.} 
\vspace{-.5cm}
\label{fig:toyex}
\end{center}
\end{figure}

\vspace{.2cm}
\noindent\textbf{The Main Steps.}
\label{sec:main-steps} 
Our approach goes through the following three main steps: 
\begin{enumerate}
\item	The fact-checking team reviews few news items (ideally those that have been the most viral ones in the past). This is considered the ground truth in our context.
\item   \systemname\ creates a probabilistic model of each user's sharing behavior based on their reactions (share / not share) to the fact-checked items in Step (1).
  This captures the likelihood of a user to share true (resp. fake) news items.
\item	For a new, unchecked news item $X$, we use the behavior models generated in Step (2) to determine the probability that $X$ is fake, based on who viewed and shared $X$.
\end{enumerate}






A high-level view of our technique is depicted in \cref{fig:schemebayes}.
We use a Bayesian approach. For example, if an item is mostly shared by users with high estimated probabilities of sharing fake items, while users with high estimated probabilities of sharing true items rarely share it, we consider the item likely to be fake (\cref{fig:toyex}).

\vspace{.2cm}
\noindent\textbf{Preventing the Spread of Fake News.}
Once we estimate the probability of a news item $X$ being fake, preventing its spread becomes easy. Let $p_0$ be any cutoff probability threshold. Each time a user views or shares $X$, we compute $p$, the probability of $X$ being fake, which we detail below. If $p \geq p_0$ (i.e., $X$ has a probability at least $p_0$ to be fake),
\systemname\ stops showing $X$ in the news feed of users, preventing $X$ from spreading and becoming viral.

\subsection{Basic Fake News Detection}
\label{sec:detection-theory}

\vspace{.2cm}
\noindent\textbf{User Behavior.}
We model the behavior of a user $u$ using the two following probabilities:
\begin{itemize}
\item $P_T(u)$: probability that $u$ shares a news item if it is true.
\item $P_F(u)$: probability that $u$ shares a news item if it is fake.
\end{itemize}

The probabilities $P_T(u)$ and $P_F(u)$ are assumed to be independent between users. In practice, this is the case if the decision to share a news item $X$ is mainly determined by $X$ itself. 

We obtain estimates of $P_T(u)$ and $P_F(u)$ for each user based on the user's behavior (share / not share) with respect to fact-checked items. For any given user $u$, let $v_T(u)$ (resp. $s_T(u)$) denote the number of fact-checked true news items \emph{viewed} (resp. \emph{shared}) by $u$, and $v_F(u)$ (resp. $s_F(u)$) the number of fact-checked fake news items \emph{viewed} (resp. \emph{shared}) by $u$.
We call the tuple $(v_T(u), s_T(u), v_F(u), s_F(u))$ the \emph{User Credulity Record} (UCR) of $u$.

User behavior has been modeled similarly in prior work, for instance, in Curb \cite{Curb}. In Curb, users exposed to a news item decide probabilistically whether to share it, potentially exposing all their followers to it. Curb relies on very similar metrics, namely the numbers of viewed and shared items for each user, and the probabilities that users would share true or false news items.

For any given user $u$, we define the following functions, based on the UCR:

\begin{itemize}

\item $\beta_1(u) = ( s_T(u)+1 ) / ( v_T(u)+2 )$
\item $\beta_2(u) = ( s_F(u)+1 ) / ( v_F(u)+2 )$
\item $\beta_3(u) = ( v_T(u) - s_T(u) +1 ) / ( v_T(u)+2 )$
\item $\beta_4(u) = ( v_F(u) - s_F(u)+1 ) / ( v_F(u)+2 )$

\end{itemize}

According to Laplace's Rule of Succession \cite{Zabell1989}, we have $P_T(u) = \beta_1(u) $ and 
$P_F(u) = \beta_2(u) $.

\vspace{.2cm}
\noindent\textbf{Probability of a News Item Being Fake.}
\label{sec:fake-detection}
At the heart of \systemname\ lies a formula to compute the likelihood of a new (not fact-checked) news item $X$ to be fake. Let $V$ and $S$ be any two sets of users that have \emph{viewed} and \emph{shared} $X$, respectively. The probability that $X$ is fake, $p(V,S)$ is computed as:

\begin{equation}
\label{eq:formula}
\boxed{p(V,S) = g\pi_F(V,S)/(g\pi_F(V,S) + (1-g)\pi_T(V,S))}
\end{equation}

\noindent{Where:}
\begin{itemize}
\item $\pi_T(V,S) = \prod_{u \in S} \beta_1(u) \prod_{u \in V - S}  \beta_3(u)$
\item $\pi_F(V,S) = \prod_{u \in S} \beta_2(u) \prod_{u \in V - S} \beta_4(u)$, and
\item $g$ is the estimated \emph{global fraction} of fake news items in the social network, with $g \in$ $(0,1)$.
\end{itemize}

$g$ can be estimated by fact-checking a set of news items picked uniformly at random from the whole social network.
Let $g^*$ be the real fraction of fake news items in the social network. We distinguish two cases: $g^*$ is known, and $g^*$ is unknown. If $g^*$ is known, we have the following theorem.

\begin{theorem}
\label{thmprob}
Let $g = g^*$.
A news item \emph{viewed} by a set of users $V$ and \emph{shared} by a set of users $S$ is fake with probability $p(V,S)$.
\end{theorem}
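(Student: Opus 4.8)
The plan is to read Theorem~\ref{thmprob} as a direct application of Bayes' theorem, once the underlying probability space is pinned down. First I would record the elementary identities $\beta_1(u)+\beta_3(u)=1$ and $\beta_2(u)+\beta_4(u)=1$, which follow immediately from the definitions of the $\beta_i$. Combined with Laplace's Rule of Succession ($P_T(u)=\beta_1(u)$, $P_F(u)=\beta_2(u)$, both already granted), this identifies $\beta_3(u)$ as the probability that $u$ does \emph{not} share an item given that it is true, and $\beta_4(u)$ as the probability that $u$ does \emph{not} share an item given that it is fake. These are exactly the per-user ``not-shared'' likelihoods that will populate the products over $V-S$ in $\pi_T$ and $\pi_F$.

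Next I would fix the probability space. Condition throughout on the event that the set of users who viewed $X$ is exactly $V$; let $H_F$ be the hypothesis ``$X$ is fake'' with prior $\Pr(H_F)=g=g^*$ and $H_T$ the complement with prior $1-g$; and model the decision of each viewer $u\in V$ to share $X$ or not as a Bernoulli trial, independent across users (the stated independence assumption on the $P_T(u),P_F(u)$), with success probability $P_F(u)$ under $H_F$ and $P_T(u)$ under $H_T$. Let $E$ be the observed event that, among the viewers in $V$, precisely the users in $S$ shared $X$. Note that since $g\in(0,1)$ and every $\beta_i(u)$ is strictly positive, the quantities $\pi_T(V,S)$ and $\pi_F(V,S)$ are positive, so the expression defining $p(V,S)$ is well posed.

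Then the computation is short. By cross-user independence, $\Pr(E\mid H_F)=\prod_{u\in S}P_F(u)\prod_{u\in V-S}(1-P_F(u))=\prod_{u\in S}\beta_2(u)\prod_{u\in V-S}\beta_4(u)=\pi_F(V,S)$, and symmetrically $\Pr(E\mid H_T)=\pi_T(V,S)$. Applying Bayes' theorem together with the law of total probability gives $\Pr(H_F\mid E)=\Pr(E\mid H_F)\Pr(H_F)\big/\big(\Pr(E\mid H_F)\Pr(H_F)+\Pr(E\mid H_T)\Pr(H_T)\big)=g\pi_F(V,S)\big/\big(g\pi_F(V,S)+(1-g)\pi_T(V,S)\big)$, which is exactly $p(V,S)$ from \eqref{eq:formula}. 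The hypothesis $g=g^*$ enters precisely in using $g^*$ as the prior $\Pr(H_F)$.

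The step I expect to require the most care is not any calculation but the specification of the probability space in the second paragraph, in particular the implicit claim that conditioning on the viewer set $V$ neither perturbs the prior $g$ on fakeness nor destroys the cross-user independence of the sharing decisions. In a real diffusion cascade the set of viewers is itself a downstream consequence of earlier sharing, hence correlated with the item's truth value; the clean statement of the theorem is really treating $V$ as exogenously given and the only randomness as the conditionally independent share/no-share choices of the users in $V$. I would make this modeling convention explicit at the outset, after which the remainder is the two-line Bayes argument above.
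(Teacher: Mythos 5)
Your proposal is correct and follows essentially the same route as the paper's own proof: identify the events, compute $P(E\mid T)=\pi_T(V,S)$ and $P(E\mid F)=\pi_F(V,S)$ via cross-user independence, take $P(F)=g^*$ as the prior, and conclude by Bayes' theorem. Your additional remarks --- the identities $\beta_1(u)+\beta_3(u)=1$ and $\beta_2(u)+\beta_4(u)=1$, and the observation that $V$ must be treated as exogenously given --- make explicit modeling conventions the paper leaves implicit, but do not change the argument.
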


\begin{proof} 

Consider a news item $X$ that has not been fact-checked. Consider the following events:
\begin{itemize}
\item[$E$]: $X$ viewed by a set of users $V$ and shared by a set of users $S$.
\item[$F$]: $X$ is fake.
\item[$T$]: $X$ is true.
\end{itemize}

Our goal is to evaluate $P(F|E)$: the probability that $X$ is fake knowing $E$.

\begin{itemize}

\item If $X$ is true, the probability of $E$ is $P(E|T) =$ \\$\prod_{u \in S} P_T(u) \prod_{u \in V - S} (1 - P_T(u)) = \pi_T(V,S)$.

\item If $X$ is fake, the probability of $E$ is
$P(E|F) =$
\\$\prod_{u \in S} P_F(u) \prod_{u \in V - S} (1 - P_F(u)) = \pi_F(V,S)$.

\end{itemize}

The probability that $X$ is fake (independently of $E$) is $P(F) = g^*$,
and the probability that $X$ is true (independently of $E$) is $P(T) = 1 - g^*$.

Thus, we can determine the probability that $E$ is true:
$P(E) = P(E|T)P(T) + P(E|F)P(F) = (1-g^*)\pi_T(V,S) + g^*\pi_F(V,S)$.

$P(F|E) = P(E|F)P(F)/P(E)$ according to Bayes' Theorem \cite{Koch1990}.
Then, $P(F|E) = g^*\pi_F(V,S)/(g^*\pi_F(V,S) + (1-g^*)\pi_T(V,S)) = p(V,S)$. Thus, the result.\end{proof}

If $g^*$ is unknown, we assume that $g$ is a lower bound of $g^*$. We get in this case the following theorem.

\begin{theorem}
\label{thmprob2}
For $g \leq g^*$, a news item \emph{viewed} by a set of users $V$ and \emph{shared} by a set of users $S$ is fake with probability \emph{at least} $p(V,S)$.
\end{theorem}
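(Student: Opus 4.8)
The plan is to reduce Theorem~\ref{thmprob2} to Theorem~\ref{thmprob} plus a one-line monotonicity argument. Theorem~\ref{thmprob} already establishes that the \emph{true} probability that $X$ is fake, conditioned on the observed event $E$ ($X$ viewed by $V$ and shared by $S$), equals
\[
p^*(V,S) \;=\; \frac{g^*\pi_F(V,S)}{g^*\pi_F(V,S) + (1-g^*)\pi_T(V,S)},
\]
that is, it is the value one obtains by plugging the real fraction $g^*$ into \cref{eq:formula} in place of $g$. So it suffices to show that the right-hand side of \cref{eq:formula}, viewed as a function of its $g$-argument with $V,S$ fixed, is nondecreasing on $(0,1)$: then replacing $g^*$ by any $g \le g^*$ can only decrease it, yielding $p(V,S) \le p^*(V,S)$, which is exactly the claim ``fake with probability at least $p(V,S)$''.

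First I would record that all four functions $\beta_1(u),\dots,\beta_4(u)$ are strictly positive, since in each case the numerator is at least $1$ and the denominator at least $2$; hence $\pi_T(V,S) > 0$ and $\pi_F(V,S) > 0$. This lets me set $r = \pi_T(V,S)/\pi_F(V,S) > 0$ and rewrite the right-hand side of \cref{eq:formula} as a function of $g$ alone:
\[
\frac{g\pi_F(V,S)}{g\pi_F(V,S) + (1-g)\pi_T(V,S)}
\;=\; \frac{g}{g + (1-g)r}
\;=\; \frac{1}{1 + r\!\left(\tfrac{1}{g}-1\right)}.
\]

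Since $g \mapsto \tfrac{1}{g}-1$ is strictly decreasing on $(0,1)$ and $r > 0$, the denominator $1 + r(\tfrac1g - 1)$ is strictly decreasing in $g$, so the whole expression is strictly increasing in $g$ on $(0,1)$. Applying this with the hypothesis $g \le g^*$ gives $p(V,S) \le p^*(V,S)$, and by Theorem~\ref{thmprob} the latter is precisely the probability that $X$ is fake given $E$; hence that probability is at least $p(V,S)$, as required.

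I do not expect a genuine obstacle here. The only point needing a moment's care is the strict positivity of $\pi_T(V,S)$ and $\pi_F(V,S)$, which makes the ratio $r$ well defined and the rewriting legitimate; the monotonicity is then immediate from the final displayed form. (One could additionally remark on the behaviour as $g \to 0^+$ if one wanted to discuss genuine lower bounds at the boundary, but the theorem assumes $g \in (0,1)$, so this does not arise.)
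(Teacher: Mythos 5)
Your proposal is correct and follows essentially the same route as the paper's own proof: both reduce to Theorem~\ref{thmprob}, note the strict positivity of $\pi_T(V,S)$ and $\pi_F(V,S)$, and establish monotonicity in $g$ by rewriting the formula as $1/(1 + r(1/g - 1))$ with $r = \pi_T(V,S)/\pi_F(V,S)$. No substantive difference to report.
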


\begin{proof}

First, note that $\pi_T(V,S)$ and $\pi_F(V,S)$ are strictly positive by definition.
Thus, the ratio $\pi_T(V,S)/\pi_F(V,S)$ is always strictly positive.

$\forall x \in (0,1)$,
let $g(x) = x \pi_F(V,S) / ( x \pi_F(V,S) $ + $(1-x) \pi_T(V,S) )$.

Then, $p(V,S) = h(g)$, and 
according to Theorem~\ref{thmprob}, the news item is fake with probability $h(g^*)$. 

Written differently, $g(x) = 1 / (1 + k(x))$, with $k(x) = (1/x - 1) \pi_T(V,S) / \pi_F(V,S)$.

As $g \leq g^*$, $1/g \geq 1/g^*$, $1 + k(g) \geq 1 + k(g^*)$
and $h(g) \leq h(g^*)$. Thus, the result.\end{proof}

\subsection{Fast Fake News Detection}
\label{sec:computing-fake-probability}

\systemname' measure of credibility of a news item $X$ is the probability $p(V,S)$ that $X$ is fake.
An obvious way to compute this probability is to recalculate $p(V,S)$ using \cref{eq:formula} each time $X$ is viewed or shared by a user.
Doing so, however, would be very expensive in terms of computation.
Below, we show an efficient method for computing news item credibility.
We first describe the computation of UCRs,
and then present our fast, incremental approach for computing news item credibility using \emph{item ratings} and \emph{UCR scores}.
This is crucial for efficiently running \systemname\ in practice.

\vspace{.2cm}
\noindent\textbf{Computing User Credulity Records (UCRs).}
Recall that the four values $(v_T(u)$, $s_T(u),$ $v_F(u),$ $s_F(u))$ constituting a UCR only concern \emph{fact-checked} news items.
We thus update the UCR of user $u$ (increment one of these four values) in the following two scenarios.
\begin{enumerate}
\item When $u$ views or shares a news item that has been fact-checked (i.e., is known to be true or fake).
\item Upon fact-checking a news item that $u$ had been exposed to.
\end{enumerate}

In general, the more fact-checked news items a user $u$ has seen and shared, the more \emph{meaningful} $u$'s UCR. Users who have not been exposed to any fact-checked items cannot contribute to \systemname. 

\vspace{.2cm}
\noindent\textbf{Item Rating.}
In addition to $p(V,S)$, we introduce another measure of how confident \systemname\ is about $X$ being fake: the \emph{item rating} $\alpha(V,S)$, whose role is equivalent to that of $p(V,S)$.
We define it as $\alpha(V,S) = \pi_T(V,S)/\pi_F(V,S)$, $V$ and $S$ being the sets of users that viewed and shared $X$, respectively.
If we also define $\alpha_0 = (1/p_0 - 1) / (1/g - 1)$ as the rating threshold corresponding to the probability threshold $p_0$,
then, $p(V,S) \geq p_0$ is equivalent to $\alpha(V,S) \leq \alpha_0$.

We have $p(V,S) = g \pi_{F}(V,S) / ( g \pi_{F}(V,S) + (1-g) \pi_{T}(V,S) )
= g / ( g  + (1-g) (\pi_{T}(V,S) / \pi_{F}(V,S) ) ) =  g / ( g  + (1-g) \alpha(V,S) )$.
We have $p(V,S) \geq p_{0}$ if and only if $g/p(V,S) \leq g/p_0$, that is:
$g + (1-g) \alpha (V,S) \leq g/p_{0}$,
which is equivalent to
$\alpha (V,S) \leq (1/p_0 - 1)/(1/g - 1)$,
that is:
$\alpha (V,S) \leq \alpha_0$.

When the item $X$ with $\alpha(V,S) \leq \alpha_0$ is about to be displayed in a user's news feed, \systemname\ suppresses $X$.
Note that $\alpha_0$ can be a fixed constant used throughout the system, but may also be part of the account settings of each user,
giving users the ability to control how ``confident'' the system needs to be about the falsehood of an item before suppressing it.

According to the definition of $\pi_T(V,S)$ and $\pi_F(V,S)$,
each time $X$ is viewed (resp. shared) by a new user $u$,
we can update $X$'s rating $\alpha(V,S)$ by multiplying it by $\gamma_v(u) = \beta_1(u)/\beta_2(u)$ (resp. $\gamma_s(u) = \beta_3(u)/\beta_4(u)$).
We call $\gamma_v(u)$ and $\gamma_s(u)$ respectively the \emph{view score} and \emph{share score} of $u$'s UCR, as their value only depends on $u$'s UCR.
Consequently, when a user views or shares $X$, we only need to access a single UCR in order to update the rating of $X$. This is what allows \systemname\ to update news item credibility fast, without recomputing eq. (1) each time the item is seen by a user.

In what follows, we refer to $\gamma_v(u)$ and $\gamma_s(u)$ as $u$'s \emph{UCR score}.
The more a UCR score differs from $1$, the stronger its influence on an item rating (which is computed as a product of UCR scores).
We consider a UCR score to be \emph{useful} if it is different from $1$.

\section{\systemname\ as a Social Media Plugin}
\label{sec:design}

\begin{figure}[t]
\centering
\includegraphics[width=0.8\columnwidth]{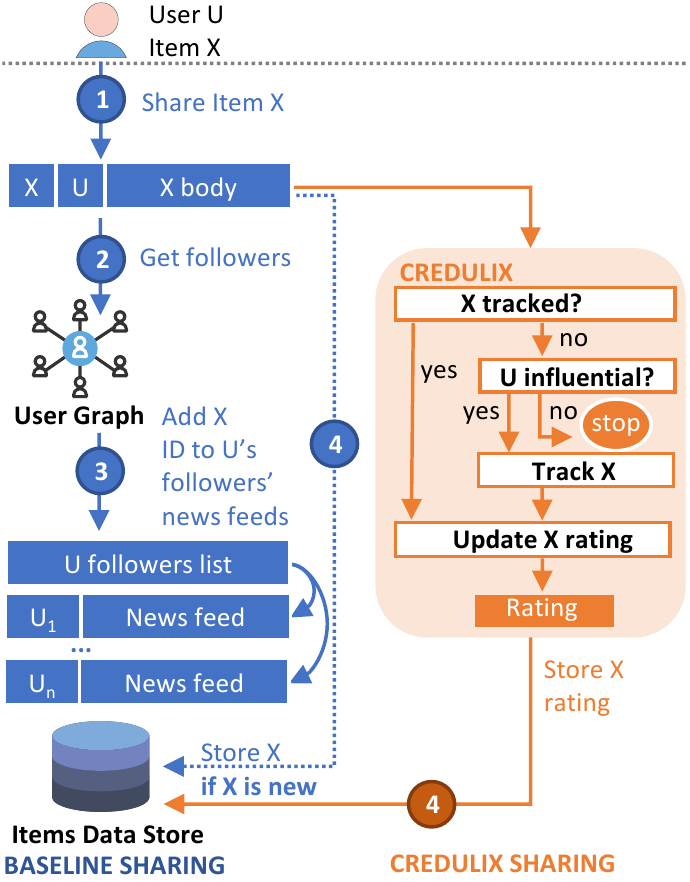}
\vspace{-.3cm}
\caption{CREDULIX' share operation.}
\vspace{-.5cm}
\label{fig:tweet-operation}
\end{figure}


\systemname\ can be seen as a plugin to an existing social network,
like, for instance, Facebook's translation feature.  The translator
observes the content displayed to users, translating it from one
language to another.  Similarly, \systemname\ observes news items
about to be displayed to users and tags or suppresses those considered
fake.

Despite the fast computtation described in \cref{sec:computing-fake-probability},
there are still notable challenges posed by turning an algorithm into a practical system.
In order for the \systemname\ plugin to be usable in practice, it must not impair user experience. 
In particular, its impact on the latency and throughput of user operations (retrieving news feeds or tweeting/sharing articles) must be small. Our design is motivated by minimizing \systemname' system resource overhead.






\vspace{.2cm}
\noindent\textbf{Selective Item Tracking.}
Every second, approximately 6000 new tweets appear on Twitter and 50000 new posts are created on Facebook \cite{internetstats,internetstats2}.
Monitoring the credibility of all these items would pose significant resource overhead. With \systemname, each view / share event requires an additional update to the news item's metadata.
However, we do not need to keep track of all the items in the system, but just the ones that show a potential of becoming viral.




\systemname\ requires each item's metadata to contain an additional bit indicating whether that item is \emph{tracked}.
The rating of item $X$ is only computed and kept up to date by \systemname\ if $X$ is tracked.

We set the \emph{tracked} bit for item $X$ when $X$ is shared by an \emph{influential user}.
We define influential users as users who have a high number of followers.
The intuition behind
this approach
is that a news item is more likely to become viral if it is disseminated by a well-connected user \cite{jenders13viraltweets}.
The follower threshold necessary for a user to be considered influential is a system parameter.
It can be chosen, for instance, as an absolute
threshold on the number of followers
or relatively to the other users (e.g., the first $n$ most popular users are influential). There are many methods to determine influential users \cite{lehmann2013finding,mathioudakis2009efficient,LearningInfluenceProbabilities, DiscoveringLeaders, RevenueMaximization, SocialInfluenceAnalysis, DiscoveringLeadersFromCA} or the likelihood of items to become viral \cite{ha2010news,ha2009relevance,lerman2010using,yang2015rain,lu2017predicting, ModelIndependentOL, TrackingInfluential, ScalableInfluenceMaximization}.
For simplicity, in our experiments we consider the 5\% most popular users influential.

\begin{figure}[t]
\centering
\includegraphics[width=0.8\columnwidth]{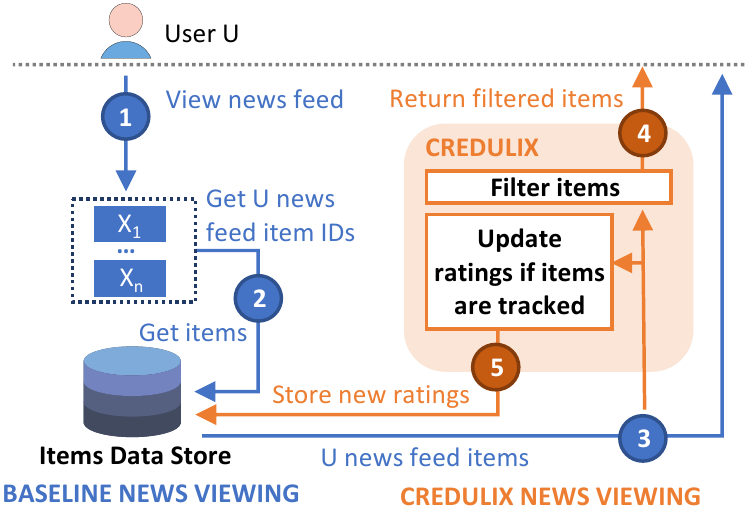}
\vspace{-.4cm}
\caption{CREDULIX' view operation.}
\label{fig:timeline-operation}
\vspace{-.5cm}
\end{figure} 


\vspace{.2cm}
\noindent\textbf{Interaction with the Social Media Platform.}
We consider two basic operations a user $u$ can perform:
\begin{itemize}
 \item \emph{Sharing} a news item and
 \item \emph{Viewing} her own news feed.
\end{itemize}
\emph{Sharing} is the operation of disseminating a news item to all of $u$'s followers (e.g., tweeting, sharing, updating Facebook status etc.).
\emph{Viewing} is the action of refreshing the news feed, to see new posts shared by users that $u$ follows.
In the following, we describe how these operations are performed in a social network platform (inspired by Twitter) without \systemname\ (Baseline) and with \systemname.
We assume that, like in Twitter, all users' news feeds are represented as lists of item IDs and stored in memory,
while the item contents are stored in an item data store \cite{twittertalk}.


\vspace{.2cm}
\noindent\textbf{Baseline Sharing.}
A schema of the \emph{Share} operation is shown in  \cref{fig:tweet-operation}.
The regular flow of the operation is shown in blue and \systemname\ is shown in orange.
User $u$ shares an item $X$ (1).
First, the social graph is queried to retrieve $u$'s followers (2).
The system then appends the ID of $X$ to the news feeds of $u$'s followers (3).
Finally, if $X$ is a new item,
the body of $X$ is stored in a data store (4).

\vspace{.2cm}
\noindent\textbf{Sharing with \systemname.}
If $u$ is not an influential user, the flow of the share operation described above stays the same.
If $u$ is influential, we mark $X$ as tracked and associate an item rating with $X$, because we expect $X$ to potentially become viral.
If $X$ is tracked, \systemname\ updates the rating of $X$ using $u$'s UCR share score. 
Thus, for tracked items, \systemname\ may require one additional write to the data store compared to the Baseline version,
in order to store the updated item rating.
This is done off the critical path of the user request, hence not affecting request latency.



\vspace{.2cm}
\noindent\textbf{Baseline News Feed Viewing.}
A schema of the \emph{View} operation is shown in \cref{fig:timeline-operation}.
User $u$ requests her news feed (1).
For each item ID in $u$'s news feed (stored in memory),
the system retrieves the corresponding item body from the data store (2)
and sends all of them back to the user (3).

\vspace{.2cm}
\noindent\textbf{Viewing News Feed with \systemname.}
\systemname\ augments the \emph{View} operation in two ways.
First, after the news feed articles are retrieved from the data store,
\systemname\ checks the ratings of the items,
filtering out the items with a high probability of being fake.
Second, if $u$'s news feed contains tracked items, \systemname\ updates the rating of those items using $u$'s UCR view score.
Hence, a supplementary write to the data store is necessary, compared to the Baseline version, for storing the items' updated ratings. Again, we do this in the background, not impacting user request latency.



\section{Evaluation}
\label{sec:evaluation}

In this section, we evaluate our implementation of \systemname\ as a stand-alone Java plugin.
We implement a Twitter clone where the \emph{share} and \emph{view} operation executions are depicted in \cref{fig:tweet-operation,fig:timeline-operation}.
We refer to the Twitter clone as \emph{Baseline}
and we compare it to the variant with \systemname\ plugged in, which we call \systemname.
For the data store of the Baseline we use Twissandra's data store \cite{twissandra},
running Cassandra version 2.2.9 \cite{cassandra}.

The goals of our evaluation are the following. First, we explore \systemname's fake news detecting efficiency. Second, we measure the performance overhead of our implementation. More precisely, we show that:
\begin{enumerate}
\item \systemname\ efficiently stops the majority of fake news from becoming viral, with no false positives. \systemname\ reduces the number of times a viral fake news item is viewed from hundreds of millions to hundreds of thousands (in \cref{sec:exp2}).
\item \systemname\ succeeds in stopping the majority of fake news from becoming viral for various user behaviors in terms of how likely users are to share news items they are exposed to (in \cref{sec:exp3}).
\item \systemname's impact on system performance is negligible for both throughput and latency (in \cref{sec:exp4}).
\end{enumerate}

\subsection{Experimental Setup}
\label{sec:setup}

We perform our evaluation using a real Twitter graph of over 41M users \cite{kwak2010twitter}. We consider users to be influential (as defined in \cref{sec:design}) if they are among the 5\% most followed users.
We use a set of tweets obtained by crawling Twitter to get a distribution of item popularity.
Out of over 35M tweets we crawled, the 1024 (0.003\%) most popular tweets are retweeted almost 90 million times, which corresponds to over 16\% of all the retweets.

Two key values influence \systemname' behavior:
\begin{itemize}
  \item \textbf{\simparam{r}:} The number of fact-checked news items during UCR creation (i.e., the number of news items constituting the ground truth).
    In our experiments we use a value of $\simparam{r} = 1024$, which causes one third of the user population to have useful UCR scores
    (more than enough for reliable fake item detection).
  \item \textbf{\simparam{msp}:} The max share probability models users' intrinsic sharing behavior: how likely users are to share news items they are exposed to. This models how users react to news items. It is not a system parameter of \systemname. We expect \simparam{msp} to be different for different news items, as some items become viral (high \simparam{msp}) and some do not (low \simparam{msp}).
Regardless of what the real value of \simparam{msp} is, \systemname\ effectively prevents fake items from going viral, as we show later.
\end{itemize}

While the network and the tweets come from real datasets, we generate the user behavior (i.e., probability to share fake and true news items), as we explain in the remainder of this section. 
We proceed in two steps. 
\begin{enumerate}
	\item \emph{UCR creation}: determining the UCR (i.e., $v_T, s_T, v_F, s_F$) for each user based on propagation of fact-checked news items.
	\item \emph{Fake item detection}: using the UCRs obtained in the previous step, we use \systemname\ to detect fake news items and stop them from spreading.
\end{enumerate}
This two-step separation is only conceptual.
In a production system, both UCR creation and fake item detection happen continuously and concurrently.

\vspace{.2cm}
\noindent\textbf{UCR Creation.}
For each user $u$, we set $P_T(u)$ and $P_F(u)$ (see \cref{sec:theory}) to values chosen uniformly at random between $0$ and \simparam{msp}. The likelihood of a user to share true or fake news is the main user characteristic used by \systemname. This approach yields different types of user behavior. We take a subset of \simparam{r} tweets from our tweet dataset and consider this subset the ground truth, randomly assigning truth values to news items.

To create the UCRs, we propagate \simparam{r} items through the social graph.
We assign each of the \simparam{r} items a target share count, which corresponds to its number of retweets in our dataset. The propagation proceeds by exposing a random user $u$ to the propagated item $X$ and having $u$ decide (based on $P_T(u)$ and $P_F(u)$) whether to share $X$ or not. If $u$ shares $X$, we show $X$ to all $u$'s followers that have not yet seen it. During item propagation, we keep track of how many true/fake items each user has seen/shared and update the UCRs accordingly. 

We repeat this process until one of the following conditions is fulfilled:
\begin{enumerate}
\item The target number of shares is reached.
\item At least 80\% of users have been exposed to $X$, at which point we consider the network saturated.
\end{enumerate}

\vspace{.2cm}
\noindent\textbf{Fake Item Detection.}
After creating the UCRs, we measure how effectively these can be leveraged to detect fake news.
To this end, in the second step of the evaluation, we propagate tweets through the social graph.
One such experiment consists of injecting an item in the system, by making a random user $u$ share it.
The propagation happens as in the previous phase, with two important differences:
\begin{enumerate}
\item Since in this second step the item we propagate is not fact-checked, we do not update $u$'s UCR.
Instead, whenever $u$ is exposed to an item, we update that item's rating using $u$'s UCR score.
We use the share score if $u$ shares the item, otherwise we use the view score (see \cref{sec:theory}).
\item We only propagate the item once, continuing until the propagation stops naturally, or until the probability of an item being fake reaches $p_0 = 0.999999$.
\end{enumerate}
We repeat this experiment $500$ times with a fake news item and $500$ times with a true news item to obtain the results presented later in this section.

We conduct the experiments on a 48-core machine,
with four 12-core Intel Xeon E7-4830 v3 processors operating at 2.1 GHz,
512 GB of RAM, running Ubuntu 16.04.

\subsection{Stopping Fake News from Becoming Viral}
\label{sec:exp2}

This experiment presents the end-to-end impact of\\
 \systemname\ on the number of times users are exposed to fake news. To this end, we measure the number of times a user is exposed to a fake news item in two scenarios:
\begin{enumerate}
\item Baseline propagation of fake news items.
\item Propagation of fake items, where we stop propagating an item when \systemname\ labels it as fake with high probability.
\end{enumerate}
\Cref{fig:msp-viewCnt} conveys results for items with varying rates of virality, modeled by our \simparam{msp} parameter. It shows how many times a user has been exposed to a fake item, cumulatively over the total number of fake items that we disseminate. We can see that regardless of how viral the items would naturally become, \systemname\ is able to timely detect the fake items before they spread to too many users. \systemname\ restricts the number of views from hundreds of millions to tens or hundreds of thousands.

None of our experiments encountered false positives (i.e., true items being incorrectly labeled as fake).
Considering the increasing responsibility being attributed to social network providers as mediators of information, it is crucial that true news items are not accidentally marked as fake.



\subsection{Fake News Detection Relative to Sharing Probability}
\label{sec:exp3}

\begin{figure}
\centering
\includegraphics[width=\columnwidth]{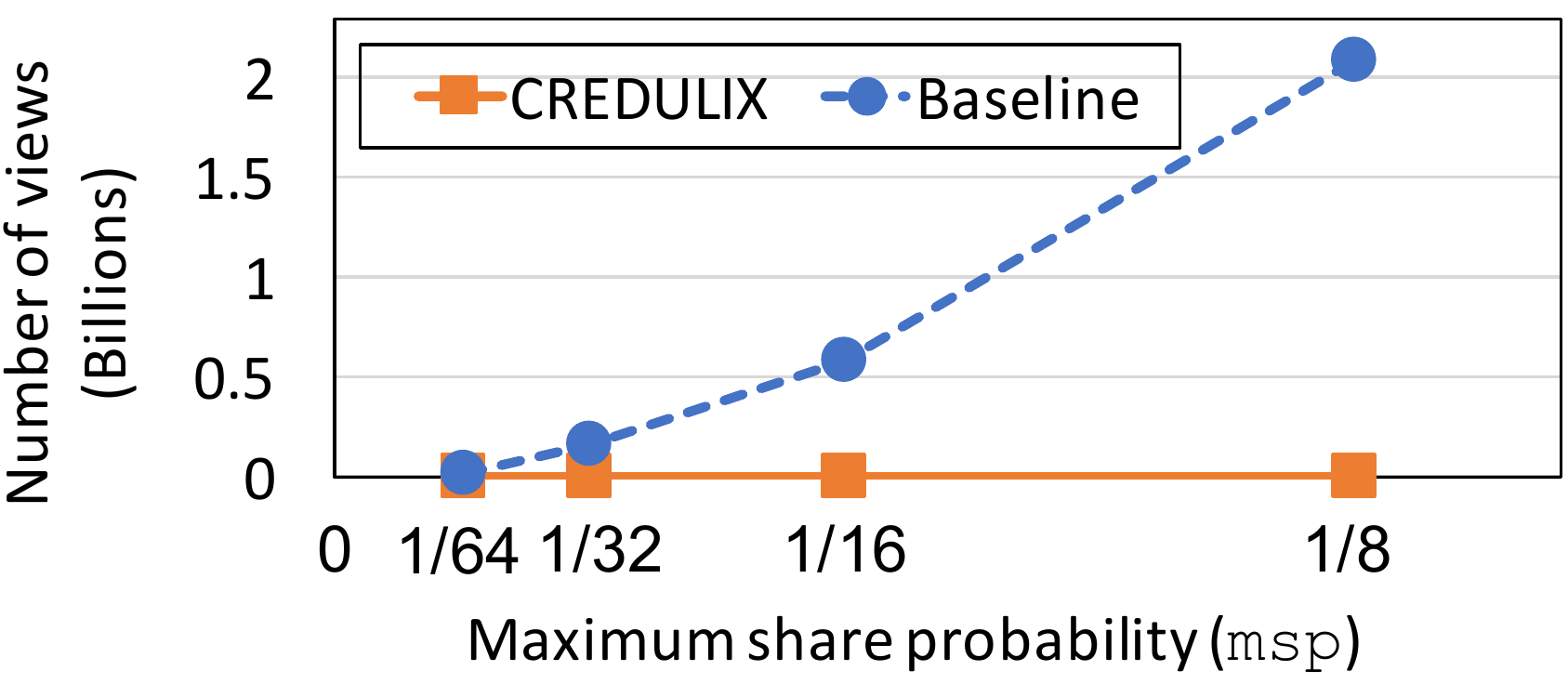}
\caption{Fake news spreading with CREDULIX, as a function of \simparam{msp} (lower is better). For low \simparam{msp}, news items do not become viral. For high \simparam{msp}, CREDULIX blocks the majority of fake news items.}
\label{fig:msp-viewCnt}
\end{figure}

In \cref{fig:msp-relative} we plot the percentage of fake items displayed with \systemname\ for two graph sizes.
On a smaller graph of 1M users generated with the SNAP generator \cite{snapnets},
\systemname\ achieves a lower fake item detection rate. This is because the impact of fact-checked items is smaller on a small graph, leading to fewer users with relevant UCR scores.
This result suggests that on a real social graph that is larger than the one we use, \systemname\ would be more efficient than in our experiments.

\Cref{fig:msp-relative} also shows how the detection rate depends on the tendency of users to share news items (that we model using \simparam{msp}).
The more viral the items get (the higher the \simparam{msp} value), the more effective \systemname\ becomes at fake item detection.
Intuitively, the more items users share, the more precisely we are able to estimate their sharing behavior.
The lower detection rate for small \simparam{msp} values does not pose a problem in practice, as a low \simparam{msp} also means that items naturally do not become viral.

While not visible in the plot, it is worth noting that not only the relative amount of viewed fake items decreases, but also the \emph{absolute} one.
For example, while for \simparam{msp} $= 1/32$ a fake news item has been displayed almost 3k times (out of over 84k for Baseline),
for \simparam{msp} $= 1/16$ a fake item has only been displayed 1.2k times (out of over 128k Baseline) in the 1M graph.
Interestingly, with increasing tendency of items to go viral (i.e. increasing \simparam{msp}),
even the \emph{absolute} number of displayed fake items decreases.
The relative decrease effect is not due to an absolute increase for Baseline.
Instead, it is due to a higher \simparam{msp} value ensuring more spreading of (both true and fake) news items in our UCR creation phase.
This in turn produces better UCRs, increasing \systemname' effectiveness.

\begin{figure}
\centering
\includegraphics[width=\columnwidth]{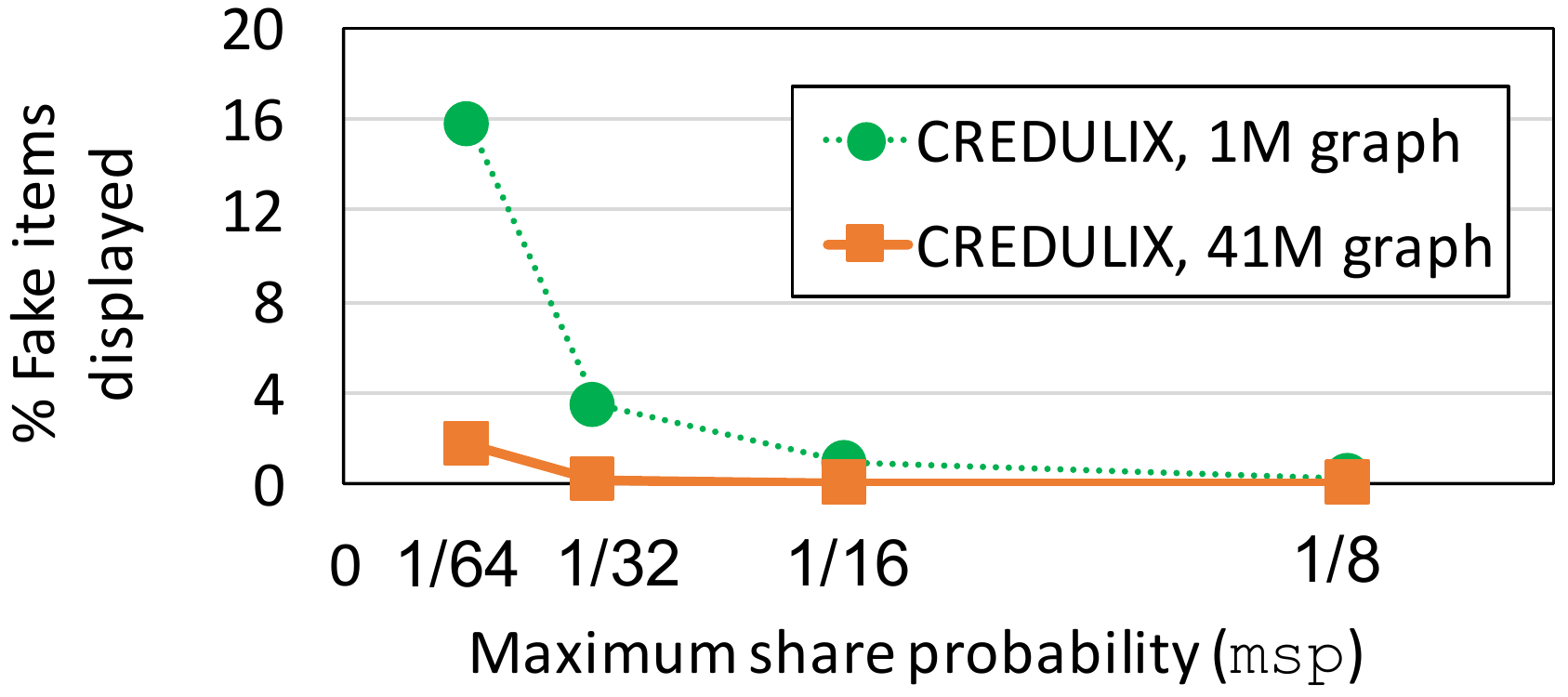}
\caption{Fake news spreading with CREDULIX, as a function of \simparam{msp}, for different social graph sizes (lower is better).}
\label{fig:msp-relative}
\end{figure}

\subsection{CREDULIX Overhead}
\label{sec:exp4}

\begin{table}[b]
\centering
\bgroup
\begin{tabular}{|c|c|}
\hline
\textbf{\simparam{msp} Value} & \textbf{\% Views, \% Shares} \\
\hline
1/8 & 94\% Views, 6\% Shares \\
\hline
1/16 & 97\% Views, 3\% Shares \\
\hline
1/32 & 99\% Views, 1\% Shares \\
\hline
1/64 & 99.9\% Views, 0.1\% Shares \\
\hline
\end{tabular}
\egroup
\vspace{.2cm}
\caption{Workload Characteristics. The only parameter we vary is \simparam{msp}, from which the view/share ratio follows.}
\vspace{-.5cm}
\label{tab:workloads}
\end{table}

\begin{figure}
\centering
\includegraphics{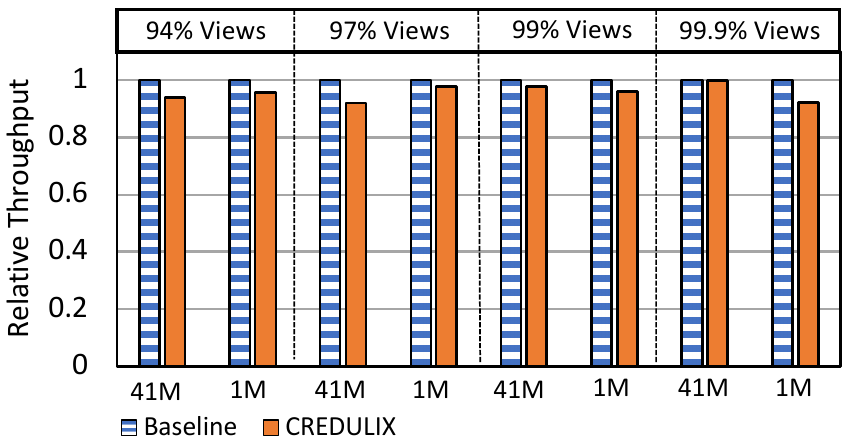}
\vspace{-.7cm}
\caption{\systemname's throughput overhead}
\label{fig:system-eval-throughput}
\end{figure} 

In this experiment, we evaluate \systemname' impact on user operations' (viewing and sharing) \emph{throughput} and \emph{latency}.
We present our results for four workloads, each corresponding to a value of \simparam{msp} discussed above (\cref{fig:msp-viewCnt,fig:msp-relative}).
The four workloads are summarized in \cref{tab:workloads}.
We present results for two social graph sizes: 41M users, and 1M users, with 16 worker threads serving user operations, showing that the \systemname's overhead in terms of throughput and latency is low.
 

\Cref{fig:system-eval-throughput} shows the throughput comparison between \systemname\ and Baseline, for the four workloads.
The throughput penalty caused by \systemname\ is at most 8\%. The impact on throughput is predominantly caused by \systemname' background tasks, as detailed in \cref{sec:design}.
Moreover, \systemname\ does not add significant overhead relative to the Baseline as the graph size increases.
The throughput differences between the two graph sizes are not larger than 10\%.
This is due to our design which relies on selective item tracking.

\Cref{fig:system-eval-latency-6shares,fig:system-eval-latency-1shares} show view and share latencies for the 94\% views and 99.9\% views workloads, respectively. The latency values for the two other workloads are similar and we omit them for brevity.

For the 41M User Twitter graph, the average and 90th percentile latencies are roughly the same for \systemname\ and for Baseline. We notice, however, heavier fluctuations for the 1M User graph.  
Overall, latency increases by at most 17\%, at the $90^{th}$ percentile, while the median latency is the same for both operations, for both systems, for both graphs (4 microseconds per operation).
The low overhead in latency is due to \systemname\ keeping its computation  outside the critical path.
Standard deviation of latencies is high both for the Baseline and for \systemname, for both share and view operations.
The high variation in latency is caused by the intrinsic differences between the users;
for instance, share operations of a user with more followers need to propagate to more users than posts of users with few or no followers.

The high $99^{th}$ percentile latency for both systems results from Twissandra (our Baseline) being implemented in Java,
a language (in)famous for its garbage collection breaks.
Certain operations, during which garbage collection is performed by the JVM, thus experience very high latencies.
This effect is not inherent to the system itself,
but is caused by the Java implementation and can be avoided by using a language with explicit memory management.
The impact of garbage collection is stronger with \systemname\ than with Baseline,
as \systemname\ creates more short-lived objects in memory to orchestrate its background tasks.
In addition to the intrinsic differences between users discussed above,
garbage collection also significantly contributes to the high standard deviation observed in all latencies.

\begin{figure}
\centering
\includegraphics{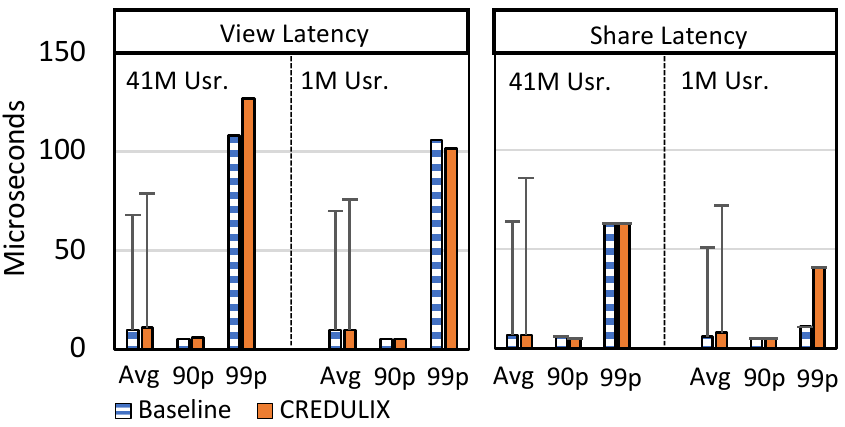}
\vspace{-.7cm}
\caption{\systemname's latency overhead: 94\% views}
\label{fig:system-eval-latency-6shares}
\end{figure} 

\begin{figure}
\centering
\includegraphics{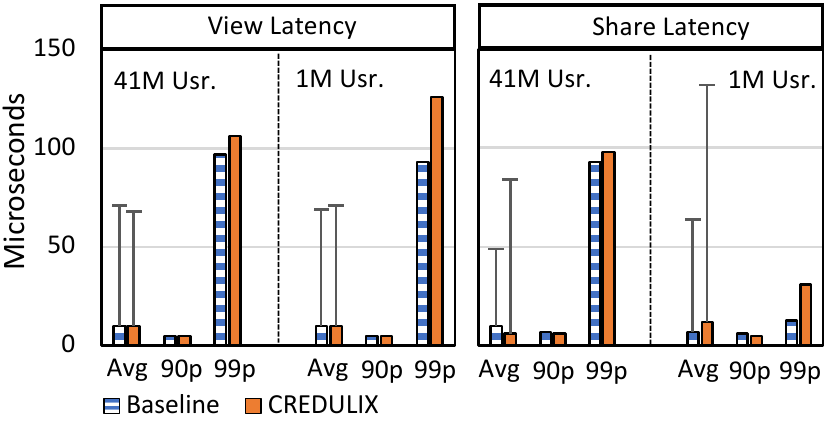}
\vspace{-.7cm}
\caption{\systemname's latency overhead: 99.9\% views}
\label{fig:system-eval-latency-1shares}
\end{figure}




\section{Discussion and limitations}
\label{sec:discussion}

We believe that \systemname\ is a good step towards addressing the fake news problem, but we do not claim it to be the ultimate solution. \systemname\ is one of many possible layers of protection against fake news and can be used independently of other mechanisms.
With \systemname\ in place, news items can still be analyzed based on content using other algorithms.
It is the combination of several approaches that can create a strong defense against the fake news phenomenon.
This section discusses the limitations of \systemname.

\vspace{2mm}
\noindent\textbf{News Propagation.}
\systemname\ does not prevent users from actively pulling any (including fake) news stories directly from their sources.
\systemname\ \emph{identifies} fake news on a social media platform and,
if used as we suggest,
prevents users from being notified about other users sharing fake news items.

\vspace{2mm}
\noindent\textbf{Manual Fact-Checking.}
\systemname\ relies on manual fact-checking and thus can only be as good as the fact-checkers.
Only users who have been exposed to manually fact-checked items can be leveraged by \systemname.
However, fact-checking a small number of popular news items is sufficient to obtain enough users with usable UCRs.
Fact-checking a small number of news items is feasible, especially given the recent upsurge of fact-checking initiatives \cite{politifact,snopes,wpfactchecker,truthorfiction,fullfact}.

\vspace{2mm}
\noindent\textbf{User Behavior.}
\systemname's algorithm is based on the assumption that among those users exposed to fact-checked news items, some share more fake items than others.
Analogous assumptions are commonly used in other contexts such as recommender systems.
For example, a user-based recommender system would not be useful if all users behaved the same way, i.e. everybody giving the same ratings to the same items.
Note that, even in the case where all users did behave the same, running \systemname\ would not negatively impact the behavior of the system: \systemname\ would not detect any fake news items, nor would it classify fake news items as true.
Our approach shines when the inclination of most users to share fake news does not change too quickly over time.
Looking at social media today, some people seem indeed to consciously and consistently spread more fake news than others \cite{macedonia,macron,gullible}.
In fact, many systems that are being successfully applied in practice (e.g. reputation systems, or systems based on collaborative filtering) fundamentally rely on this same assumption.

\vspace{2mm}
\noindent\textbf{Malicious Attacks.}
The assumption that users do not change their behavior too quickly could, however, potentially be exploited by a malicious adversary.
Such an adversary controlling many machine-operated user accounts could deceive the system by breaking this assumption.
For example, all accounts controlled by the adversary could be sharing only true reviewed news items for an extended period of time and then suddenly share an un-reviewed fake one (or do the opposite, if the goal is to prevent a truthful news item from being disseminated).
Even then, a successful attack would only enable the spread of the news item, without guaranteeing its virality.
Moreover, the adversary runs a risk that the fake item will later be fact-checked and thus will appear in their UCRs, reducing the chances of repeating this attack with another fake item.
In fact, the more popular such an item becomes (which is the likely goal of the adversary), the higher the chance of it being fact-checked.
The trade-off between false positives and false negatives is expressed by the $p_0$ parameter (see \cref{sec:theory}), i.e. the certainty required to flag an item as fake.
A high value of $p_0$ might make it easier for the adversary to ``smuggle'' fake items in the system, but makes it more difficult to prevent true news items from spreading.

\vspace{2mm}
\noindent\textbf{Updating of the Ground Truth.}
Like any vaccine, \systemname\ relies on a fraction of fake news to exist in the social network in order to be efficient. If \systemname\ stops the fake news from becoming viral, then the system might lack the ground truth to make future predictions. Hence, there might be periods when fake news can appear again.
To avoid such fluctuations, \systemname' ground truth should be continuously updated with some of the most current fake news items.
\systemname' evolution in time, including changes in user behavior as well as updating of the ground truth related to system dynamics, are research directions we are considering for future work.

\vspace{2mm}
\noindent\textbf{Filtering News.}
One could argue that removing some news items from users' news feeds might be seen as a limitation, even as a form of censorship.
But social media already take that liberty as they display to users only about 10\% of the news that they could show.
Rather than censorship, \systemname\ should be viewed as an effort to ensure the highest possible quality of the items displayed, considering the credibility of an item to be one of the quality criteria.


\section{Related Work}
\label{sec_rw}

\systemname\ shares similarities with \emph{reputation systems} \cite{recomsysw,Kokkodis:2013,rodriguez2007smartocracy},
in creating profiles for users (UCRs in \systemname) and in assuming that the future behavior of users will be similar to their past behavior.
In our approach, however, users are not rated directly by other users.
Instead, we compute users' UCRs based on their reaction to what we consider ground truth (fact-checked items).

\systemname\ also resembles \emph{recommender systems} \cite{resnick1997recommender,ahn2010towards,amatriain2009collaborative,lin2012premise,liu2011wisdom,su2009survey,garimella2017mary,koutra2015events,de2012chatter,onuma2009tangent} in the sense that it pre-selects items for users to see.
Unlike in recommender systems, however, the pre-selection is independent of the requesting user.
Our goal is not to provide a personalized selection.

Another approach to detect fake news is to \emph{automatically check content} \cite{fake_T6}. Content analysis can also help detect machine-generated fake blogs \cite{fake_S2} or social spam using many popular tags for irrelevant content \cite{fake_S3}.
Another line of research has been motivated by the role of social networks for news dissemmination in the case of catastrophic events such as hurricanes and earthquakes\cite{fake_U1,fake_N4}.

News item credibility can also be inferred by applying \emph{machine learning} techniques \cite{dewan2015towards},
by using a set of reliable content as a training set \cite{fake_M2},
or by analyzing a set of predetermined features \cite{fake_N2}.
Other parameters of interest are linguistic quantifiers \cite{fake_T}, swear words, pronouns or emoticons \cite{fake_M3}.
Yet, a malicious agent knowing these specific features could use them to spread fake news.

Following the events of 2016, Facebook received much media attention concerning their politics about ``fake news''.
Their first aproach was to \emph{assess} \emph{news sources reliability} in a centralized way \cite{fb_wpost}. Recently, Facebook launched \emph{community-based} assesment experiment \cite{fb_techcrunch}:
asking the users to evaluate the reliability of various news sources.
The idea is to give more exposure to news sources that are ``broadly trusted'''. Our approach is finer-grained and goes to the level of news \emph{items}. Facebook also used third-party fact checkers to look at articles flagged by users.

Dispute Finder \cite{DisputeFinder} also takes a community approach asking users not only to flag claims they believe to be disputed, but also to link the disputed claim to a reliable source that refutes it. \systemname, does not look at concrete claims, but infers the trustworthiness of news items based on user behavior.

\emph{Fact-checking tools} can help to annotate documents and to create knowledge bases \cite{DocumentCloud, OpenCalais, FactMinder, SourceSight, FusingData}. Automated verification of claims about structured data, or finding interesting claims that can be made over a given set of data has been researched as well \cite{iCheck, FactWatcher}.
These tools facilitate the fact-checking process that \systemname\ relies on, but by themselves do not prevent the spreading of fake news. Curb \cite{Curb}, is algorithm that is close to \systemname\ in several aspects. Like \systemname, it leverages the crowd to detect and reduce the spread of fake news and misinformation and  assumes a very similar user behavior model. Curb, however, focuses on the problem of which items to fact-check and when, relying on users to manually flag items. Curb only prevents the spreading of items that have been fact-checked.
In addition, it assumes fact-checking to happen instantaneously,
i.e., a fake item selected for fact-checking by Curb is assumed to stop spreading immediately,
without taking into account the considerable fact-checking delay. \systemname\ could benefit from Curb by using it as a fact-checking module.



\section{Conclusions}
\label{sec_conc}

We presented \systemname, the first content-agnostic system to detect and limit the spread of fake news on social networks with a very small performance overhead.
Using a Bayesian approach, \systemname\ \emph{learns} from human fact-checking to compute the probability of falsehood of news items,
based on who shared them.
Applied to a real-world social network of 41M users, \systemname\ prevents the vast majority (over 99\%) of fake news items from becoming viral, with no false positives.


\bibliographystyle{ACM-Reference-Format}
\bibliography{sample-bibliography}

\end{document}